\newcommand{\ignore}[1]{}
\newtheorem{theorem}{Theorem}[section]
\newtheorem{Lemma}[theorem]{Lemma}
\newtheorem{Instance}[theorem]{Instance}
\newtheorem{Definition}[theorem]{Definition}
\begin{document}
	
	\title{Runtime Analysis of RLS and the (1+1)~EA for the Chance-constrained Knapsack Problem with Correlated Uniform Weights}

\author{Yue Xie \textsuperscript{\rm 1}}
\author{Aneta Neumann \textsuperscript{\rm 1}}
\author{Frank Neumann \textsuperscript{\rm 1}}
\author{Andrew M. Sutton \textsuperscript{\rm 2}}
\affil{ \textsuperscript{\rm 1}Optimisation and Logistics, School of Computer Science,\\ The University of Adelaide, Adelaide, Australia}
\affil{ \textsuperscript{\rm 2}Department of Computer Science,\\ University of Minnesota Duluth, United States of America}
\renewcommand\Authands{ , }
\maketitle
\begin{abstract}
Addressing a complex real-world optimization problem is a challenging task. The chance-constrained knapsack problem with correlated uniform weights plays an important role in the case where dependent stochastic components are considered. We perform a runtime analysis of a randomized search algorithm (RSA) and a basic evolutionary algorithm (EA) for the chance-constrained knapsack problem with correlated uniform weights. We prove bounds for both algorithms for producing a feasible solution. Furthermore, we investigate the behavior of the algorithms and carry out analyses on two settings:  uniform profit value and the setting in which every group shares an arbitrary profit profile. We provide insight into the structure of these problems and show how the weight correlations and the different types of profit profiles influence the runtime behavior of both algorithms in the chance-constrained setting.

\end{abstract}

\section{Introduction}

Evolutionary algorithms are bio-inspired randomized optimization techniques and have been shown to be very successful when applied to various stochastic combinatorial optimization problems \cite{HORNG,TILL,Rakshit}. A significant challenge for real-world applications is that one must often solve large-scale, complex, and uncertain optimization problems where constraint violations have extremely disruptive effects. 

In recent years, evolutionary algorithms for solving dynamic and stochastic combinatorial optimization problems have been theoretically analyzed in a number of articles \cite{He2014,Andrei2017,Vahid2019,Vahid2018}. The techniques that used in runtime analysis has significantly increased understanding of bio-inspired approaches in theoretical field \cite{DROSTE200251,Auger2011,Thomas2013,Frank2012,Carsten2013}. When tackling new problems, such studies typically begin with basic algorithms such as Randomized Local Search (RLS) and (1+1)~EA, which we also investigate in this paper.

An important class of stochastic optimization problems is \emph{chance-constrained} optimization problems \cite{Charnes1959,poojari}. Chance-constrained programming has been carefully studied in the operations research community \cite{Grani2015,Hanasusanto2017,Janiele19}. In this domain, chance constraints are used to model problems and relax them into equivalent nonlinear optimization problems which can then be solved by nonlinear programming solvers \cite{Pu08,Odetayo18,Maryam18}. Despite its attention in operations research, chance-constrained optimization has gained comparatively little attention in the area of evolutionary computation \cite{Liu13}. 

The chance-constrained knapsack problem is a stochastic version of the classical knapsack problem where the weight of the items are stochastic variables. The goal is to maximize the total profit under the constraint that the knapsack capacity bound is violated with a probability of at most a pre-defined tolerance $\alpha$. Recent papers \cite{Yue19,Yue20} study a chance-constrained knapsack problem where the weight of the items are stochastic variables and independent to each other. They introduce the use of suitable probabilistic tools such as Chebyshev's inequality and Chernoff bounds to estimate the probability of violating the constraint of a given solution, providing surrogate functions for the chance constraint, and present single- and multi-objective evolutionary algorithms for the problem. 

The research of chance-constrained optimization problems associated with evolutionary algorithms is an important new research direction from both a theoretical and a practical perspective. Recently, Doerr et al.\ \cite{Benjamin20} analyzed the approximation behavior of greedy algorithms for chance-constrained submodular problems. Assimi et al.~\cite{Hirad20} conducted an empirical investigation on the performance of evolutionary algorithms solving the dynamic chance-constrained knapsack problem. 

From a theoretical perspective, Neumann et al.\ \cite{Aneta20} worked out the first runtime analysis of evolutionary multi-objective algorithms for chance-constrained submodular functions and proved that the multi-objective evolutionary algorithms outperform greedy algorithms. Neumann and Sutton \cite{Frank19} conducted a runtime analysis of the chance-constrained knapsack problem, but only focused on the case of independent weights.

In this paper, analyze the expected optimization time of RLS and the (1+1)~EA on the chance-constrained knapsack problem with correlated uniform weights. This variant partitions the set of items into groups, and pairs of items within the same group have correlated weights. To the best of our knowledge, this is a new direction in the research area of chance-constrained optimization problems. We prove bounds on both the time to find a feasible solution, as well as the time to obtain the optimal solution which has both maximal profit and minimal probability of violating the chance-constrained. In particular, we first prove that a feasible solution can be found by RLS in time bounded by $O(n\log n)$, and by the (1+1)~EA in time bounded by $O(n^2\log n)$. Then, we investigate the optimization time for these algorithms when the profit values are uniform which has been study in the deterministic constrained optimization problems \cite{FRIEDRICH20203}. However, the items in our case are divided into different groups and need to take the number of chosen items from each group into account, and the optimization time bound for RLS becomes $O(n^3)$ and $O(n^3\log n)$ for the (1+1)~EA. 
After that, we consider the more general and complicated case in which profits may be arbitrary as long as each group has the same set of profit values. We show that an upper bound of $O(n^3)$ holds for RLS and $O(n^3 (\log n + \log p_{max}))$ holds for the (1+1)~EA where $p_{max}$ denotes the maximal profit among all items.

This paper is structured as follows. We describe the problem and the surrogate function of the chance constraint in Section \ref{sec:preliminaries} as well as the algorithms. Section \ref{sec:reachfeasibel} presents the runtime results for different algorithms produce a feasible solution, and the expected optimization time for different profit setting of the problem present in Section \ref{sec:optimalSolution} and Section \ref{sec:generalProfit}. Finally, we finish with some conclusions.

\section{Preliminaries}
\label{sec:preliminaries}

The chance-constrained knapsack problem is a constrained combinatorial optimization problem which aims to maximize a profit function and subjects to the probability that the weight exceeds a given bound is no more than an acceptable threshold. In previous research, the weights of items are stochastic and independent of each other. We investigate the chance-constrained knapsack problem in the context of uniform random correlated weights.

Formally, in the chance-constrained knapsack problem with correlated uniform weights, the input is given by a set of $n$ items partitioned to $K$ groups of $m$ items each. We denote as $e_{ij}$ the $j$-th item in group $i$. Each item has an associated stochastic weight. The weights of items in different groups are independent, but the weights of items in the same group $k$ are correlated with one another with a shared covariance $c > 0$, i.e., we have $cov(e_{kj},e_{kl})=c$, and $cov(e_{kj},e_{il})=0$ iff $k \neq i$. The stochastic non-negative weights of items are modeled as $n= K \cdot m$ random variables $\{w_{11},w_{12},\ldots,w_{1m},\ldots,w_{Km}\}$ where $w_{ij}$ denotes the weight of $j$-th item in group $i$. Item $e_{ij}$ has expected weight $E[w_{ij}]=a_{ij}$, variance $\sigma^2_{ij} = d$ and profit $p_{ij}$.

The chance-constrained knapsack problem with correlated uniform weights can be formulated as follows:
\begin{align}
     & \text{maximize} \  & p(x)= \sum_{i=1}^K \sum_{j=1}^m p_{ij}x_{ij} \\
     & \text{subject to} \ & \Pr(W(x) > B) \leq \alpha .
     \label{con:chance}
\end{align}
The objective of this problem is to select a set of items that maximizes profit subject to the chance constraint, which requires that the solution violates the constraint bound $B$ only with probability at most $\alpha$.

A solution is characterized as a vector of binary decision variables $x =(x_{11},x_{12},\\ \ldots,x_{1m},\ldots,x_{Km}) \in \{0,1\}^n$. When $x_{ij}=1$, the $j$-th item of the $i$-th group is selected. The weight of a solution $x$ is the random variable
\begin{align}
    W(x) =\sum_{i=1}^K \sum_{j=1}^m w_{ij}x_{ij},
\end{align}
with expectation
\begin{align}
    E[W(x)] =\sum_{i=1}^K \sum_{j=1}^m x_{ij},
\end{align}
and variance
\begin{align}
    Var[W(x)] = d\sum_{i=1}^K \sum_{j=1}^m x_{ij} + 2c\sum_{i=1}^K \sum_{1\leq j_1 < j_2\leq m}(x_{ij_1} x_{ij_2}). 
\end{align}

\begin{Definition}
Among all solutions with exactly $\ell$ one bits, we call a search point $x; |x|_1=\ell$ a \textbf{balanced solution}, denoted by $\ell^{b}$ if it selects $\left\lfloor\frac{\ell}{K}\right\rfloor$ items from $K-\left(\ell-\left\lfloor\frac{\ell}{K}\right\rfloor\cdot K\right)$ groups and $\left\lfloor\frac{\ell}{K}\right\rfloor +1$ items from the remaining $\ell-\left\lfloor\frac{\ell}{K}\right\rfloor\cdot K$ groups. This solution has covariance 
\begin{equation}
\resizebox{1.0\hsize}{!}{$s_{\ell}^{b}= c\left\{\left[K-\left(\ell-\left\lfloor\frac{\ell}{K}\right\rfloor\right)\right]\left\lfloor\frac{\ell}{K}\right\rfloor \left(\left\lfloor\frac{\ell}{K}\right\rfloor-1\right)+\left(\ell-\left\lfloor\frac{\ell}{K}\right\rfloor\right)\left(\left\lfloor\frac{\ell}{K}\right\rfloor+1\right)\left\lfloor\frac{\ell}{K}\right\rfloor\right\}.$} \nonumber
\end{equation}
Solutions with exactly $\ell$ bits that are not balanced solutions are called \textbf{unbalanced solutions}. 
\label{def:partition}
\end{Definition}

Among all unbalanced solutions, we call the following one the \textbf{most unbalanced solution} denoted by $\ell^{ub}$, which selects exactly $m$ items from $\left\lfloor\frac{\ell}{m}\right\rfloor$ groups and $\left(\ell-\left\lfloor\frac{\ell}{m}\right\rfloor \cdot m\right)$ items from another group. Since $m$ is the maximal number of items in each group, in the most unbalanced solution, there are $\left\lfloor\frac{\ell}{m}\right\rfloor$ full groups and one other group containing the remaining items. This solution has covariance $$s_{\ell}^{ub} = c\left[\left\lfloor\frac{\ell}{m}\right\rfloor m (m-1)+\left(\ell-\left\lfloor\frac{\ell}{m}\right\rfloor m \right)\left(i- \left\lfloor\frac{\ell}{m}\right\rfloor m -1\right)\right].$$ 

We calculate the upper bound of the covariance of acceptable solutions according to Chebyshev's inequality for all solutions with $\ell$ one bits. Denote by 
$$s_{\ell}= 2c \sum_{i=1}^K \sum_{1 \leq j_1 <j_2 \leq m} (x_{ij_1}  x_{ij_2})$$
the covariance of the solution $x$ and $\ell$ denotes the number of one bits in solutions. Then, the bound according to Chebyshev's inequality gives

\begin{align}
&\frac{\ell d+ s_{\ell}}{\ell d +s_{\ell} + (B-a\ell)^2}\leq \alpha \\
\Longleftrightarrow  & \ell d+s_{\ell} \leq \alpha({\ell}d+s_{\ell}+(B-a \ell)^2) \\
\Longleftrightarrow  & s_{\ell} \leq \frac{(B-a\ell)^2 \alpha}{1-\alpha}- \ell d.
\label{fun:coBound}
\end{align}
Therefore, the covariance of feasible solutions with exactly $\ell$ one bits is bounded above by
$\frac{(B-a \ell)^2 \alpha}{1-\alpha}-\ell d$.

In this paper, we assume the weights of items are correlated uniformly, and its hard to calculate the exact probability of violating the chance constraint. Similar to recent work on the uncorrelated problem~\cite{Yue19}, we use the one-sided Chebyshev's inequality (cf.\ Theorem \ref{thm:cheb}) to construct a usable surrogate of the chance constraint \eqref{con:chance}.

\begin{theorem}[One-sided Chebyshev's inequality]
\label{thm:cheb}
  Let $X$ be a random variable with expectation $E[X]$ and variance $Var[X]$. Then for any $k\in\mathbb{R}^+$, 
  \begin{equation}
    \Pr(X > E[X] +k)\leq \frac{Var[X]}{Var[X]+k^2}.
  \end{equation}
\end{theorem}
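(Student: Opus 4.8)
The plan is to reduce to a centered random variable and then apply Markov's inequality to a cleverly shifted and squared version of it, optimizing a free parameter at the end. First I would set $Y = X - E[X]$, so that $E[Y]=0$ and $Var[Y]=Var[X]$, and note that the target event $\{X > E[X]+k\}$ coincides with $\{Y > k\}$. The key observation is that for any slack parameter $t \geq 0$ we have $k+t > 0$, so on the event $\{Y > k\}$ the quantity $Y+t$ is positive and strictly exceeds $k+t$; hence $\{Y > k\} \subseteq \{(Y+t)^2 \geq (k+t)^2\}$.

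Next I would apply Markov's inequality to the nonnegative random variable $(Y+t)^2$, using the fact that $E[(Y+t)^2] = E[Y^2] + 2t\,E[Y] + t^2 = Var[X] + t^2$, where the cross term vanishes because $Y$ is centered. Combined with the containment above, this yields, for every $t\ge 0$,
\[
\Pr(X > E[X]+k) \le \frac{Var[X]+t^2}{(k+t)^2}.
\]

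The final step is to make this bound as tight as possible by minimizing the right-hand side over $t \ge 0$. Differentiating the single-variable function $t \mapsto (Var[X]+t^2)/(k+t)^2$ and setting the derivative to zero gives the optimal choice $t = Var[X]/k$; substituting this value and simplifying collapses the expression to $Var[X]/(Var[X]+k^2)$, which is exactly the claimed bound.

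I expect the only genuine subtlety to be the event-containment step: one must verify that squaring preserves the inequality, which relies on both $Y+t$ and $k+t$ being nonnegative on the relevant event. This is precisely why the shift $t$ is taken nonnegative and why the hypothesis $k\in\mathbb{R}^+$ is needed. The optimization over $t$ is a routine calculus exercise, and checking that $t=Var[X]/k$ is a minimizer (rather than a maximizer) follows immediately from the sign of the derivative. The degenerate case $Var[X]=0$ can be handled separately or absorbed by letting $t\to 0$.
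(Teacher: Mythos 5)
Your proof is correct: this is the standard derivation of Cantelli's (one-sided Chebyshev's) inequality via centering, the event containment $\{Y>k\}\subseteq\{(Y+t)^2\geq(k+t)^2\}$ for $t\geq 0$, Markov's inequality, and optimization over $t$ yielding $t=Var[X]/k$. Note that the paper itself states this theorem without proof, treating it as a known probabilistic tool imported from the literature, so there is no in-paper argument to compare against; your write-up supplies exactly the textbook proof one would cite, and your attention to the sign conditions needed for the squaring step and to the degenerate case $Var[X]=0$ is appropriate.
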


For the chance-constrained knapsack problem with correlated uniform weights, we define the surrogate function $\beta$ over decision vectors as 
\begin{align}
    \beta(x) =\frac{Var[W(x)]}{Var[W(x)]+(B-E[W(x)])^2}.
    \label{con:surrogate}
\end{align}
It is clear by Theorem \ref{thm:cheb} that $\Pr(W(x)\geq B) \leq \beta(x)$, and therefore every $x$ such that $\beta(x) \leq \alpha$ is also feasible. 

\begin{algorithm}[t]
\caption{Randomized Local Search (RLS)} 
\begin{algorithmic}[1]
\STATE Choose $x\in \{0,1\}^n$ to be a decision vector.
\WHILE { \textit{stopping criterion not met}}
\STATE Choose $b\in \{0,1\}$ randomly.
\IF {$b=0$}
\STATE{choose $i\in\{1,\ldots,n\}$ randomly and define $y$ by flipping the $i$th bit of $x$.}
\ELSE
\STATE{choose $(i,j)\in \{(k,l)|1\leq k<l\leq n\}$ randomly and define $y$ by flipping the $i$th and the $j$th bit of $x$.} 
\ENDIF
\IF{$f(y)\geq f(x)$} 
\STATE $x \leftarrow y$ ;
\ENDIF
\ENDWHILE 
\end{algorithmic}
\label{alg:rls}
\end{algorithm}

\begin{algorithm}[t]
\caption{(1+1)~EA}
\begin{algorithmic}[1]
\STATE Choose $x\in \{0,1\}^n$ to be a decision vector.
\WHILE { \textit{stopping criterion not met}}
\STATE $y\leftarrow$ flip each bit of $x$ independently with probability of $\frac{1}{n}$;
\IF{$f(y)\geq f(x)$} 
\STATE $x \leftarrow y$ ;
\ENDIF
\ENDWHILE 
\end{algorithmic}
\label{alg:oneone}
\end{algorithm}

We study the runtime of RLS and the (1+1)~EA defined in Algorithm \ref{alg:rls} and Algorithm \ref{alg:oneone} for optimization of the chance-constrained knapsack problem with dependent weights. RLS starts with a randomly initialized solution and iteratively improves it by applying a series of mutations. In each mutation step, it applies either \emph{one-} or \emph{two-bit} mutation with equal probability. Specifically, with probability $1/2$, it selects a single index uniformly at random from $\{1,\ldots,n\}$ an flips the corresponding bit in the current solution. Otherwise, it selects two distinct indexes uniformly at random to flip.
The (1+1)~EA also starts with a randomly initialized solution, but generates new candidate solutions by flipping each bit of the current solution with a probability of $1/n$, where $n$ is the length of the bit string. In the selection step, both algorithms accept the offspring if it is at least as good as the parent. We are interested in finding the optimal solution which is the feasible solution with maximal profit. We define the optimization time of RLS and the (1+1)~EA as the number of necessary steps until such an optimal solution is constructed. By considering the surrogate function obtained by the one-sided Chebyshev's inequality, we employ the fitness function
\begin{align}
    f(x)&:= (p'(x),\beta'(x)),
    \label{fun:fitness}
\end{align}
where $p'(x)=-1$ iff $\beta'(x) > \alpha$ and $p'(x)=p(x)$ otherwise, $\beta'(x) = \beta(x)$ iff $E[W(x)]< B$ and $\beta'(x) = 1+E[W(x)]-B$ otherwise. We optimize $f$ in lexicographic order where the goal is to maximize $p'(x)$ and minimize $\beta(x)$, i.e. we have
\begin{eqnarray}
& & f(x) \succeq f(y) \nonumber\\
&\Longleftrightarrow & p'(x) > p'(y)\\
& & \text{or} \left(p'(x)=p'(y)  \wedge \beta(x)\leq \beta(y)\right). \nonumber
\end{eqnarray}

Since selection is monotone, once a feasible solution is located, neither algorithm will subsequently accept an infeasible solution. Therefore, the process of finding an optimal solution can be separated into two parts: in the first part, the algorithm may first need to find a feasible solution. In the second part, it must find the highest profit among all feasible solutions.

\section{Obtaining feasible solutions}
\label{sec:reachfeasibel}

In this section, we analyze the expected time for RLS and the (1+1)~EA to find feasible solutions. 

\begin{Lemma}
Starting with an arbitrary initial solution, the expected time until RLS has obtained a feasible solution is $O(n\log n)$.
\label{lem:RLSfindfeasible}
\end{Lemma}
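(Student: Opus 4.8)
The plan is to show that, from any infeasible starting point, RLS is driven monotonically toward feasibility by repeatedly deleting items, and then to bound the time by a coupon-collector-style sum over the number of selected items.

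First I would establish the central monotonicity fact: in the infeasible region every one-bit flip that removes an item (turns a $1$ into a $0$) strictly decreases the objective component $\beta'$ and is therefore accepted, while every flip that inserts an item strictly increases $\beta'$ and is rejected. Write $\ell = |x|_1$, so that $E[W(x)] = \ell$ and $V := Var[W(x)] = \ell d + s_{\ell}$. Deleting an item lowers the expected weight by one and lowers the variance by $d$ plus $2c$ times the number of co-selected items in its group; since $d>0$ and $c>0$, the variance strictly drops. I would then split according to the definition of $\beta'$. When $E[W(x)] \ge B$ we have $\beta'(x) = 1 + E[W(x)] - B$, which depends only on $\ell$ and shrinks as $\ell$ shrinks. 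When $E[W(x)] < B$ we have $\beta'(x) = \beta(x) = \tfrac{V}{V + (B-\ell)^2}$; here a deletion simultaneously lowers the numerator $V$ and, because $\ell < B$, enlarges $(B-\ell)^2$, so $\beta$ strictly decreases. A deletion that crosses from the first regime into the second also decreases $\beta'$ (from a value $\ge 1$ to a value $< 1$). This confirms that every deletion is accepted and every insertion rejected, so $\ell$ is non-increasing throughout the infeasible phase.

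Because insertions are always rejected and deletions always accepted, the only change RLS can make to $\ell$ in the infeasible region is to decrease it, and it reaches feasibility no later than reaching the all-zeros point (which has $\beta = 0 \le \alpha$ and is always feasible). It then remains to bound the expected number of steps needed to drive $\ell$ down. From a search point with $\ell$ ones, the event that a single-bit mutation flips one of the $\ell$ set bits has probability $\tfrac{1}{2}\cdot\tfrac{\ell}{n}$, and such a step is accepted; hence the expected waiting time for $\ell$ to drop by at least one is at most $\tfrac{2n}{\ell}$. Summing over all values of $\ell$ that can be traversed, and using that $\ell$ starts at most $n$, the total expected time to reach a feasible solution is at most $\sum_{\ell=1}^{n} \tfrac{2n}{\ell} = 2n H_n = O(n\log n)$, as claimed.

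I expect the main obstacle to be the monotonicity argument, namely verifying that $\beta'$ strictly decreases under item deletion in all cases. The delicate points are the boundary crossing between the two regimes of $\beta'$ and the fact that the deleted item's variance contribution is strictly positive, both of which rely on $c>0$, $d>0$, and $\ell < B$ in the relevant regime. Once this monotonicity is secured, the runtime bound follows from the standard harmonic sum with essentially no further computation.
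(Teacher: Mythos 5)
Your proof is correct and follows essentially the same route as the paper's: argue that in the infeasible phase RLS only accepts mutations that decrease the number of ones, then bound the time by the harmonic sum $\sum_{\ell=1}^{n} 2n/\ell = O(n\log n)$. Your case analysis of the two regimes of $\beta'$ (expected weight above versus below $B$) is a welcome elaboration of a monotonicity claim the paper merely asserts.
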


\begin{proof}
Adding a new item to the selected set will increase both the total expected weight and the probability of violating the chance constraint. Since all items have the same expected weight $a$, the sum of expected weight can be simply represented by the number of ones in the solution. 

The fitness function is defined in such a way that the total expected weight of a solution will never increase as long as no feasible solution has been obtained. This implies that RLS never accepts mutations that increase the number of ones, and only accepts a decrease in the number of ones. RLS cannot accept any single bit flips that flip a one to zero, or 2-bit flips that flip two one-bits to zeros.  

Therefore, at any solution $x; |x|_1=\ell$, there are $\ell$ one bits to decrease, and the probability to decrease the number of ones is at least $\frac{\ell}{2n}$. Hence, the expected waiting time until RLS constructs a feasible solution is bounded above by 
    $2n\left(1+ \cdots +\frac{1}{n}\right) = O(n\log n)$.
\end{proof}

\begin{Lemma}
Starting with an arbitrary initial solution, the expected time until the (1+1)~EA  obtains a feasible solution is $O(n^2\log n)$.
\label{lem:onereachfeasible}
\end{Lemma}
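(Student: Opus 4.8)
The plan is to follow the structure of the proof of Lemma~\ref{lem:RLSfindfeasible}, arguing that while no feasible solution has been found the algorithm is driven toward solutions of smaller expected weight, and then to bound the waiting time by a fitness-level argument over the number of one-bits. As in the RLS case, selection on \eqref{fun:fitness} reduces, in the infeasible region, to minimizing $\beta'(x)$, since every infeasible solution has $p'(x)=-1$. I would first record the key observation that a mutation which flips a single one-bit to zero and leaves all other bits unchanged always strictly decreases $\beta'$: it lowers both $E[W(x)]$ and $Var[W(x)]$, hence decreases $\beta$ when $E[W(x)]<B$ and decreases the penalty $1+E[W(x)]-B$ when $E[W(x)]\ge B$. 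Such a step is therefore always accepted, and for the (1+1)~EA it occurs with probability at least $\frac{\ell}{n}\left(1-\frac{1}{n}\right)^{n-1}\ge \frac{\ell}{en}$ from a search point with $\ell$ one-bits.

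The main obstacle, and the reason the bound is weaker than for RLS, is that the simple invariant used there---that the number of one-bits never increases---no longer holds for global mutations. A step of the (1+1)~EA may flip many bits at once, and a net increase of the one-count can be accepted provided it is accompanied by a sufficiently large drop in the covariance term $s_{\ell}$, so that $Var[W(x)]$ (and hence $\beta$) decreases even though $E[W(x)]$ grows. I would control this by splitting the infeasible region into two regimes. In the regime $E[W(x)]\ge B$ the penalized value $\beta'(x)=1+E[W(x)]-B$ depends only on the one-count and is insensitive to the covariance; here count-increasing offspring are strictly worse and are rejected, so the count is monotonically non-increasing exactly as for RLS, and single-bit removals drive $E[W(x)]$ below $B$.

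For the second regime, $E[W(x)]<B$ with $\beta(x)>\alpha$, I would argue via the feasibility threshold implied by \eqref{fun:coBound}: there is a value $\ell_0$ such that every solution with at most $\ell_0$ one-bits is feasible (in the extreme, the empty solution has $\beta=0$), so it suffices to reduce the count to $\ell_0$. Since a count-increasing accepted step requires flipping at least two bits and therefore has probability $O(1/n^2)$, whereas a count-decreasing single-bit removal has probability $\Omega(\ell/n)$, I would set up a drift argument on the one-count exhibiting a net downward tendency and bounding the expected number of steps needed to lose one unit of count from level $\ell$ by $O(n^2/\ell)$. Summing these waiting times over the at most $n$ relevant levels yields $\sum_{\ell} O(n^2/\ell)=O(n^2\log n)$.

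I expect the delicate step to be making the drift in the second regime rigorous: one must verify that the upward contribution of the rare, covariance-reducing multi-bit moves cannot overwhelm the steady downward pull of the single-bit removals, which requires bounding both the probability and the magnitude of accepted count-increasing steps. Establishing this domination---rather than the routine harmonic-sum bookkeeping---is where the real work lies.
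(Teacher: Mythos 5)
Your first regime ($E[W(x)]\ge B$) matches the paper's first phase: there the penalty $1+E[W(x)]-B$ depends only on the number of ones, so the one-count is non-increasing and the process is OneMax-like. The genuine gap is in your second regime. You assert that an accepted count-increasing step ``requires flipping at least two bits and therefore has probability $O(1/n^2)$''. This conflates the probability of one \emph{specific} multi-bit flip with the probability of the \emph{union} of all such flips. An accepted step that raises the count by one must indeed flip at least two $0$-bits and one $1$-bit (a lone $0\to 1$ flip increases both $Var[W(x)]$ and $E[W(x)]$, hence increases $\beta$, and is rejected), but there can be $\Theta(n^3)$ such triples that are \emph{all} accepted. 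For instance, with $K=2$ groups and a current solution selecting every item of group~1 and none of group~2, removing any one item of group~1 while adding any two items of group~2 reduces the covariance by $\Theta(cn)$, and for suitable parameter choices this decreases $\beta$ even though $\ell$ grows; there are $\frac{n}{2}\binom{n/2}{2}=\Theta(n^3)$ such triples, each occurring with probability $\Theta(n^{-3})$, so the total probability of an accepted count-increasing step is $\Theta(1)$ --- comparable to, or exceeding, the $\ell/(en)$ probability of a single-bit removal. The net drift on the one-count can therefore point \emph{upward}, and the harmonic-sum bookkeeping that follows has nothing to rest on. You correctly flagged this domination as ``where the real work lies'', but the work is not done, and the specific bound you propose for it is false.

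The paper sidesteps exactly this difficulty by performing drift analysis not on the one-count but on the potential $h(x)=Var[W(x)]=d\ell+s_\ell$: when $E[W(x)]<B$, an accepted move that increases $\ell$ shrinks $(B-E[W(x)])^2$ and so is forced to shrink the variance, which means the troublesome rebalancing moves \emph{decrease} $h$ rather than fight it. The paper then shows that the $\ell$ single-bit removals alone satisfy $\sum_{i}\bigl(h(x)-h(y_i)\bigr)\ge h(x)$, yielding a multiplicative drift of $h(x)/(en)$, hence $O(n\log n)$ per level and $O(n^2\log n)$ over the at most $n$ levels. To rescue your count-based route you would need either to prove that accepted count-increasing moves are rare in every reachable configuration (which the example above rules out in general) or to replace $\ell$ by a potential such as $h$ that is monotone under all accepted moves --- at which point you have essentially reconstructed the paper's argument.
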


\begin{proof}
According to the definition of the fitness function in Equation \eqref{fun:fitness}, before finding a solution with expected weight less than $B$, the (1+1)~EA never accepts a solution that increases the number of one bits. Therefore, before producing such a solution, the algorithm only accepts mutations that reduce the number of one bits, and thus behaves identically to the optimization of the classical OneMax problem. The expected time for the (1+1)~EA to find a solution $x$ with $E[W(x)] < B$ is thus bounded by $O(n\log n)$, i.e., its expected running time on OneMax~\cite{Heinz92}. 

After finding a solution with expected weight less than $B$, the (1+1)~EA always accepts the solution with smaller constraint-violation probability according to the Chebyshev's inequality.
We construct a potential function $h :\{0,1\}^n$ as the sum of the variance and covariance of a solution,
\begin{align}
    h(x)= d \ell + 2c \sum_{i=1}^K  \sum_{1\leq j_1 < j_2\leq m}(x_{ij_i} x_{ij_2}), 
\end{align}
where $\ell$ denotes the number of items selected by solution $x$, $|x|_1 =\ell$ and $E[W(x)] < B$.

For a solution $x$ with $\ell$ one bits, the (1+1)~EA can reduce the potential $h(x)$ and the violation probability when flipping any one of the $\ell$ one bits to zero. Let the solution $y; |y|= \ell-1$ be an offspring generated from $x$ by flipping a one bit to zero. Then, we have $h(y)<h(x)$ and $\beta(y) < \beta(x)$. Let $x'$ be the next possible acceptable solution for the (1+1)~EA with $\ell$ one bits, then solution $x'$ should be better than solution $y$ according to the fitness function. We have $\beta(x') \leq \beta(y) < \beta(x)$ and $h(x')\leq h(y)<h(x)$ according to the Chebyshev's inequality. 

Given solution $x$, we consider all steps that flipping a 1-bit the algorithm generates solution $y$ after finding the solution $x'$ which reduces the value of the potential function. Let $\{r_1,\ldots,r_K\}$, where $0 \le r_i \le m$ denotes the number of items in group $i$ selected by $x$. Assume $x'$ is generated from $x$ by flipping a one bit from group $i$ to zero. Then, the reduction in potential for this mutation is the sum of the variance of this item and the difference of the covariance between $r_i$ elements and $r_i-1$ elements. That is,
$$d+ c(r_i(r_i-1)-(r_i-1)(r_i-2))=d+2c (r_i-1).$$ 
In group $i$, there are $r_i$ single bit flips that achieve this reduction, so the total contribution for group $i$ 
is $$dr_i+2cr_i(r_i-1)\geq dr_i+ 2c \frac{r_i(r_i-1)}{2}.$$ Summing over all groups we have 
\begin{align}
    \sum_{i=1}^K dr_i +2c r_i(r_i-1)\geq h(x).
    \label{fun:sumlowbond}
\end{align}

Therefore, after producing all single bit flips where each one bit of $\ell$ bits in $x$ has been flipped to zero once, the sum of gains with respect to the potential function should be as least as large as $h(x)$.

For all $t \in N $, let $x^{(t)}$ be the search point of the (1+1)~EA for the problem at time $t$ and $X^{(t)}=h(x)$. Then $$X^{(t)}-X^{(t+1)} = h(x^{(t)})- h(x^{(t+1)}).$$ Let $x\in \{0,1\}^n$ be a fixed nonempty solution, and let the points $y_1,\ldots,y_{\ell}$ be the $\ell$ different search points in $\{0,1\}^n$ generated from $x$ by first flipping one of the different $\ell$ one bit to zero. Thus, we have  by $h(y_i)\leq h(x)$ for all $i\in\{1,\ldots,\ell\}$ and inequality (\ref{fun:sumlowbond}) that
\begin{align}
    \sum_{i=1}^{\ell} \left(h(x)-h(y_i)\right) \geq h(x).
\end{align}
Since the $y_i$'s are generated from $x$ by a single bit flip each, we have 

\begin{align}
    \Pr(x^{(t+1)}=y_i\mid x^{(t)}=x)= \left( \left(1-\frac{1}{n}\right)^{(n-1)}\left(\frac{1}{n}\right)\right) \geq \frac{1}{en}
\end{align}
for all $i\in \{1,\ldots,\ell\}$. Furthermore
\begin{align}
    E[X^{(t)}-X^{(t+1)}\mid x^{(t)}=x, x^{(t+1)}=y_i] = h(x)-h(y_i)
\end{align}
holds for all $i\in\{1,\ldots,\ell\}$.

The (1+1)~EA never increases the current $h$-value of a search point, that is, $X^{(t)}-X^{(t+1)}$ is non-negative. Therefore, we have 
\begin{align}
    E[X^{(t)}-X^{(t+1)}\mid x^{(t)} =x] \geq \sum_{i=1}^k \left(h(x)-h(y_i)\right)\frac{1}{en}
\end{align}
and therefore, we have for all $x\in \{0,1\}^n$ that
$$E[X^{(t)}-X^{(t+1)}\mid x^{(t)}=x]\geq \frac{h(x)}{en}=\frac{X^{(t)}}{en}.$$

Therefore, the drift on $X^{(t)}$ is at least $\frac{h(x)}{en}$, and since the algorithm starts with $h(x)\leq h_{\ell}^{ub}$ where $h_{\ell}^{ub}$ denotes the sum of variance and covariance of the most unbalanced solution from the same level of $x$, and the minimum value of $h$ before reaching $h=0$ is 1, by multiplicative drift analysis, we find the expected time of at most
\begin{align}
    \frac{1+ \log(h_{\ell}^{ub})}{\frac{1}{en}} = O(n\log n)
\end{align}
to reach a solution with the number of one bits less than the starting search point. Let $\ell$ denote the number of one bits for the starting point, and the probability value of this solution is better than the best probability value for any solution with $\ell$ ones.

Furthermore, there is at most $n$ levels in the search space, and it takes $O(n\log n)$ steps for the (1+1)~EA to produce all possible solutions in each level. Altogether, the expected time of (1+1)~EA to search for a feasible solution is at most $O(n^2\log n)$.
\end{proof}

\section{Uniform Profits}
\label{sec:optimalSolution}

In this section, we assume the algorithms have produced a feasible solution and analyze the expected time that the (1+1)~EA and RLS require to find the optimal solution. We begin our study with the case that the deterministic profits are uniform. Since the actual value of profits does not affect the analysis, it is convenient to use unit profits.

\begin{Instance}
\label{Ins:uniform profit}

Given $K$ groups, each group has $m$ items. There are $n= K \cdot m$ items in total, the capacity of the knapsack is bounded by $B$. For $1\leq i \leq K$, $1 \leq j \leq m$, let $p_{ij}=1$, $a_{ij}=a$, $\sigma^2_{ij}=d$, where $d>0$ is a constant. The covariance of items within any group is $c$, i.e., we have $cov(e_{ij},e_{kl})=c$ iff $i=k$ and  $cov(e_{ij},e_{kl})=0$ otherwise.
\end{Instance}

\begin{Definition}
Let $r= \max\{|x|_1 \mid \exists x\in \{0,1\}^n \text{ with } \beta(x)\leq \alpha\}$ and partition the feasible search space by $L_0,L_1,\ldots,L_r$ such that 
\begin{align}
    L_i = \{x\in \{0,1\}^n : |x|_1 =i \ \text{ with} \ \beta(x)\leq \alpha\}.
\end{align}

We further bi-partition each partition $L_i$ into two sets $S_{i\gamma}$ and $S_{i\zeta}$ such that $S_{i\gamma} \cup S_{i\zeta} = L_i$ and $S_{i\gamma} \cap S_{i\zeta} = \emptyset$ as follows.

The set $S_{i\gamma} \subseteq L_i$ contains all feasible solutions where no extra item can be added without violating the chance constraint and $S_{i\zeta} \subseteq L_i$ is the set containing all feasible solutions where at least one extra item can be added to obtain a feasible solution with at least $i+1$ ones.
\label{def:bi-partition}
\end{Definition}

\begin{Lemma}
Starting with an arbitrary initial solution, the expected optimization time of RLS on the chance-constrained knapsack problem with correlated uniform weight is $O(n^3)$.
\end{Lemma}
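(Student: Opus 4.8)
The plan is to split the run into a feasibility phase and an optimization phase, exploiting that with unit profits the profit of a solution equals its number of one-bits, so that the lexicographic fitness \eqref{fun:fitness} drives $|x|_1$ upward and, within a fixed number of one-bits, drives the covariance $s_\ell$ downward. By Lemma~\ref{lem:RLSfindfeasible} a feasible solution is reached in expected time $O(n\log n)$, and by monotone selection RLS never again accepts an infeasible solution nor one with fewer one-bits; hence $|x|_1$ is non-decreasing and increases at most $r\le n$ times before the optimum (a solution with $r$ one-bits and minimal $\beta$) is found.

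First I would set up the per-level structure using Definition~\ref{def:bi-partition}. If the current search point lies in $S_{i\zeta}$, then some single $0\to1$ flip yields a feasible solution with $i+1$ one-bits, which RLS accepts; a suitable one-bit mutation has probability at least $\tfrac{1}{2n}$, so the level increases in expected time $O(n)$. If instead the point lies in $S_{i\gamma}$, no item can be added, and I would argue that the point must then be unbalanced: the balanced solution $i^{b}$ minimizes the covariance among all level-$i$ solutions, and for $i<r$ it lies in $S_{i\zeta}$, since $(i+1)^{b}$ is feasible (as $i+1\le r$) and is obtained from $i^{b}$ by adding one item to a least-full group. Consequently, from any point of $S_{i\gamma}$ RLS can apply a covariance-decreasing two-bit \emph{rebalancing} move, removing a one-bit from a group with $r_a$ selected items and adding a one-bit in a group with $r_b\le r_a-2$, which changes the covariance by $-2c(r_a-1-r_b)<0$ and is therefore accepted; repeating such moves reaches $i^{b}\in S_{i\zeta}$, after which the level increases.

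The quantitative heart of the proof is to bound the expected time spent at a single level by $O(n^2)$; summing over the at most $n+1$ levels then yields the claimed $O(n^3)$, the $O(n\log n)$ feasibility time being absorbed. To this end I would introduce the potential $\Phi(x)=\sum_{k=1}^{K} r_k^2$, where $r_k$ denotes the number of selected items in group $k$. Since $s_\ell=c\bigl(\Phi(x)-\ell\bigr)$, minimizing $\Phi$ at a fixed level is equivalent to minimizing $\beta$, and the balanced solution is exactly the minimizer of $\Phi$. Each accepted rebalancing move decreases $\Phi$ by $2(r_a-1-r_b)\ge 2$, and the imbalance $\Phi-\Phi_{\min}$ at a level is $O(nm)=O(n^2)$. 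A drift analysis on $\Phi$, using that a prescribed two-bit mutation occurs with probability $\Theta(1/n^2)$, then controls the rebalancing time and, at level $r$, also delivers the minimal-$\beta$ optimum.

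The main obstacle is precisely this drift estimate. A naive bound that charges each rebalancing move the full waiting time $\Theta(n^2)$ of a single prescribed two-bit flip only gives $O(n^4)$ per level, which is too weak. The key point I would have to make rigorous is that an unbalanced level-$i$ solution admits \emph{many} simultaneously covariance-decreasing two-bit moves, one for every choice of a one-bit in an over-full group together with a zero-bit in an under-full group, so that the expected one-step decrease of $\Phi$ is far larger than $\Theta(1/n^2)$ and grows with the imbalance. Making this trade-off precise, so that the accumulated rebalancing cost across a level is $O(n^2)$ rather than $O(n^4)$, is the delicate step; the climbing moves, the level count, and the feasibility phase are then routine.
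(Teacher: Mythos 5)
Your decomposition is the same as the paper's (feasibility phase via Lemma~\ref{lem:RLSfindfeasible}, then at most $n$ levels, each level handled by covariance-decreasing $2$-bit swaps toward the balanced solution followed by a single $0\to1$ flip to climb), but the proof is not complete: you explicitly leave open the one estimate that carries the whole bound, namely that the time spent rebalancing within a level is $O(n^2)$ rather than the naive $O(n^4)$. That is a genuine gap, and the potential you propose does not close it in the most direct way. The paper's closure does not use a drift theorem on $\Phi=\sum_k r_k^2$ at all; it tracks the quantity $g$ = number of items chosen by $x$ but not by a balanced solution at the same level (equivalently, the number of vacant positions of a balanced solution not chosen by $x$). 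At imbalance $g$ there are at least $g$ one-bits sitting in over-full groups and at least $g$ zero-bit positions in under-full groups, and \emph{every} pairing of one of the former with one of the latter is an accepted swap that decreases $g$ by one (the covariance change is $-2c(r_a-1-r_b)\le 0$ since $r_a\ge\lfloor \ell/K\rfloor+1$ and $r_b\le\lfloor \ell/K\rfloor$, and acceptance in \eqref{fun:fitness} only requires $\beta$ not to increase). Hence the per-step success probability is at least $g^2/(2n^2)$, $g$ is non-increasing under accepted moves, and the per-level time is at most $\sum_{g=1}^{\ell} 2n^2/g^2 = O(n^2)$ by convergence of $\sum_{g\ge1} g^{-2}$. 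Summing over at most $n$ levels gives $O(n^3)$.

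Two further cautions about your version of the rebalancing step. First, you restrict rebalancing moves to pairs with $r_b\le r_a-2$ so that the covariance strictly decreases; but the counting above needs the swaps with $r_a-1-r_b=0$ as well (they are accepted because $\beta$ is unchanged and they still decrease $g$), so insisting on strict decrease both shrinks the set of useful moves and is unnecessary. Second, your potential $\Phi$ is not the right object to run the argument on: a swap can strictly decrease $\Phi$ without decreasing $g$ (e.g., moving an item from a group at $\lfloor\ell/K\rfloor+2$ into a group already at its target), and the drift in $\Phi$ you would obtain, roughly $\Omega(g^2/n^2)$ against a range $\Phi-\Phi_{\min}=\Theta(mg)$, yields $O(m^2n^2)$ per level by variable drift, which is $O(n^4)$ in the worst case $m=\Theta(n)$. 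Tracking $g$ directly, as the paper does, avoids both issues.
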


\begin{proof}

Due to Lemma \ref{lem:RLSfindfeasible}, RLS finds a feasible solution in expected time $O(n\log n )$. Also, since all feasible solutions dominate any infeasible solution, the algorithm does not return to the infeasible region. 

Let $x\in L_{\ell}$. If $x\in S_{\ell \zeta}$, then there is at least one additional item that can be feasibly selected. This selection occurs with probability $1/2n$. Otherwise, only a 2-bit flip changing a zero and a one to a zero is accepted if it reduces the covariance of the solution without changing the profit until the algorithm produces a balanced solution on the same level. 

According to Definition \ref{def:partition}, the balanced solution in each level has the smallest covariance and number of items selected from each group. Let $l_i(x)$, $1\leq i\leq K$ be the number of elements chosen by $x$ from group $i$. Assume, without loss of generality, that the groups are sorted in increasing order with respect to the $l_i(x)$. Furthermore, let
\begin{align}
    s(x)=&\sum_{i=1}^{K-\left(\ell-\left\lfloor\frac{\ell}{K}\right\rfloor\cdot K\right)} \max\left\{0, \left\lfloor\frac{\ell}{K}\right\rfloor-l_i(x)\right\} \nonumber \\
    & +  \sum_{K-(\ell-\left\lfloor\frac{\ell}{K}\right\rfloor\cdot K)+1}^K \max\left\{0, \left\lfloor\frac{\ell}{K}\right\rfloor+1-l_i(x)\right\}
\end{align}
be the number of items that belong to an arbitrary balanced solution, but not chosen by $x$, and let
\begin{align}
    t(x)=&\sum_{i=K-\left(\ell-\left\lfloor\frac{\ell}{K}\right\rfloor\cdot K\right)+1}^K \max\left\{0, l_i(x)-\left(\left\lfloor\frac{\ell}{K}\right\rfloor+1\right)\right\} \nonumber \\
    & +  \sum_{i=1}^{K-\left(\ell-\left\lfloor\frac{\ell}{K}\right\rfloor\cdot K\right)} \max\left\{0, l_i(x)-\left\lfloor\frac{\ell}{K}\right\rfloor\right\}
\end{align}
be the number of items chosen by $x$, but do not belong to a balanced solution. Note that $s(x)$ should be equal to $t(x)$ for any feasible solution in $L_{\ell}$. Let $g=s(x)=t(x)$. 

As there are exactly $\ell$ 1-bits in the solution $x$, and $s(x)$ is a fixed value, this implies that there are at $s(x)$ 1-bits which can be swapped with an arbitrary 0-bit of the missing $g$ elements in order to reduce the covariance of $x$. Hence, the probability of such swapping is at least $g^2/2n^2$. Since $g$ cannot increase and $g\leq \ell$, it suffices to sum up these expected waiting times, and the expected time until reaching $g=0$ is
$$\sum_{g=1}^{\ell} (2n^2/g^2)= O\left(n^2(1-1/i)\right).$$ 

There are at most $n$ level of $L_{\ell}$ which implies that the expected time until an optimal solution has been achieved is 
$$\sum_{i=\ell-1}^n (n^2- n^2/i) = O(n^3-n^2\log n)= O(n^3),$$
which completes the proof.
\end{proof}

\begin{Lemma}
Let $x \in S_{\ell\gamma}$, then there exists some $q\in\{1,\ldots,n-1\}$ such that $q$ different accepted 2-bit flips of the (1+1)~EA reduce the covariance the solution. The expected one-step change of the (1+1)~EA is $\frac{X^{(t)}}{en^2}$.
\label{lemma:twobitflip}
\end{Lemma}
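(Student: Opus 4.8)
The plan is to mirror the multiplicative-drift argument of Lemma~\ref{lem:onereachfeasible}, but now at a fixed profit level, tracking the \emph{excess} covariance rather than the full potential. Since the profits are uniform, every solution in $L_\ell$ has the same profit, so among moves that keep the solution in $L_\ell$ the (1+1)~EA accepts an offspring iff its covariance does not exceed that of $x$. Because $x\in S_{\ell\gamma}$, no single $0$-bit can be added feasibly, so the only strictly improving moves at this profit level are $2$-bit \emph{swaps} that flip one $1$-bit (in some group $i$) to $0$ and one $0$-bit (in some group $j$) to $1$. Writing $r_1,\dots,r_K$ for the numbers of selected items per group, such a swap lowers the covariance by exactly $2c\,(r_i-r_j-1)$, so it is accepted and strictly reduces $\beta$ precisely when $r_i\ge r_j+2$. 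I would take the potential to be the excess covariance $X^{(t)}=s_\ell(x)-s_\ell^b\ge 0$, which vanishes exactly on the balanced solutions of Definition~\ref{def:partition}; when $X^{(t)}>0$ the size multiset is unbalanced, so $\max_i r_i-\min_i r_i\ge 2$ and at least one reducing swap exists, giving $q\ge 1$.

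First I would exhibit the $q$ relevant flips and prove the analogue of inequality \eqref{fun:sumlowbond}, namely $\sum_{k=1}^{q}\rho_k \ge X^{(t)}$, where $\rho_k$ is the covariance reduction of the $k$-th flip evaluated at the fixed point $x$. To do this, I would consider the sequence of balancing moves that repeatedly transfers an item from a currently largest group to a currently smallest group until the balanced multiset is reached; this sequence lowers the covariance from $s_\ell(x)$ down to $s_\ell^b$, so its \emph{sequential} reductions sum to exactly $X^{(t)}$, and it uses at most $\ell\le n-1$ transfers (whence $q\in\{1,\dots,n-1\}$). Realising each transfer as a concrete $2$-bit flip out of the static solution $x$, the source group is at least as large and the destination group at most as large as at the corresponding sequential step, so each $\rho_k$ dominates its sequential counterpart; summing yields $\sum_k \rho_k \ge X^{(t)}$. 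I would also note that these flips can be taken as distinct bit-pairs, since in $x$ each source group retains enough $1$-bits and each destination enough $0$-bits to supply them.

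Then I would close the drift exactly as in Lemma~\ref{lem:onereachfeasible}. A specific $2$-bit flip is produced by the (1+1)~EA with probability $(1/n)^2(1-1/n)^{n-2}\ge 1/(en^2)$, and by monotone (lexicographic) selection the covariance, and hence $X^{(t)}$, never increases while the solution stays in $L_\ell$ (a move to higher profit only helps and ends this phase). Letting $y_1,\dots,y_q$ be the corresponding offspring, this gives
\[
E\!\left[X^{(t)}-X^{(t+1)}\mid x^{(t)}=x\right]\ \ge\ \sum_{k=1}^{q}\rho_k\cdot\frac{1}{en^2}\ \ge\ \frac{X^{(t)}}{en^2},
\]
which is the claimed one-step expected change.

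The hard part will be the combinatorial inequality $\sum_k \rho_k\ge X^{(t)}$, and its delicate point is that the covariance depends only on the multiset of group sizes: a swap that merely permutes which group carries the extra item has $r_i=r_j+1$ and reduces nothing, so I must restrict attention to genuinely reducing swaps ($r_i\ge r_j+2$) and argue that the reductions taken at the static point $x$ dominate the diminishing sequential reductions. Everything else (the acceptance characterisation, the $1/(en^2)$ mutation probability, and the drift assembly) is routine and parallels the feasibility analysis.
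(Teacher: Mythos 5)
Your proposal is correct and follows the same overall architecture as the paper's proof: the potential is the excess covariance $X^{(t)}=s_x-s^b_\ell$ over the balanced solution of Definition~\ref{def:partition}, the witness set is a collection of $q$ covariance-reducing $2$-bit swaps that collectively transform $x$ into a balanced solution on the same level, and the conclusion follows by combining the per-flip probability $(1/n)^2(1-1/n)^{n-2}\ge 1/(en^2)$ with the non-increase of covariance under lexicographic selection. Where you genuinely diverge is in how the key inequality $\sum_k \rho_k \ge X^{(t)}$ is established. The paper fixes the balanced solution $x'$, pairs the $q$ elements of $x\setminus x'$ with those of $x'\setminus x$, and bounds the total reduction by summing explicit per-group contributions of the form $2c(r_i-1)(r_i-\lceil \ell/K\rceil)$ and $2cr'_j(\lfloor \ell/K\rfloor - r'_j)$ against $c\,r(r-1)$ terms. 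You instead generate the flips from a largest-to-smallest balancing sequence, observe that the sequential reductions telescope to exactly $X^{(t)}$, and then argue that each reduction evaluated statically at $x$ dominates its sequential counterpart because source groups only shrink and destination groups only grow along the sequence. Your route buys a cleaner and more transparent verification of the summation inequality (the paper's inequalities at this point are presented somewhat loosely, with sums dropped and indices conflated), and it makes explicit the point you correctly flag as delicate: only swaps with $r_i\ge r_j+2$ strictly reduce the covariance, which your construction guarantees since every sequential move is made only while the max--min gap is at least $2$. The paper's route is more direct in that it never needs the domination argument, reading the total reduction off the endpoint $x'$ in one step. Both yield the same drift bound; I see no gap in your argument.
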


\begin{proof}
Let $x \in S_{\ell\gamma}, |x|_1=\ell$ and let $s_x$ denote the covariance of $x$. Then according to the inequality (\ref{fun:coBound}), $s_x$ is bounded by $\frac{(B-ai)^2 \alpha}{1-\alpha}-d\ell$. Let $x'\in S_{i\zeta}, |x'|_1=\ell$ be the balanced solution which take $\left\lceil\frac{\ell}{K}\right\rceil$ elements from the first $\ell-\left\lfloor\frac{\ell}{K}\right\rfloor K$ groups and take $\left\lfloor\frac{\ell}{K}\right\rfloor$ elements from the last $K-\ell-\left\lfloor\frac{\ell}{K}\right\rfloor K$ groups.

We assume that the $K$ groups in solution $x$ are sorted in increasing order with respect to the number of elements chosen by $x$. Then, let $q$ denote the Hamming distance between $x$ and $x'$, and $I=\max\{x-x',0\}$ denotes the set different of the elements chosen by $x$ but not by $x'$, and $I'=\max\{x'-x,0\}$ be the set different of element chosen by $x'$ but not by $x$. The number of elements in set $I$ and $I'$ should be the same and equal to $q$. A 2- bit flip flipping bit $i\in I$ from 1 to 0 and bit $j \in J$ from 0 to 1 can reduce the covariance of the problem which leads to the reduction of probability. As there are $q$ elements in set $I$ and $I'$ separately, they can be matched into $q$ pairs. Performing all such $q$ 2-bit flip can reduce the covariance of solution and simultaneously changes $x$ into $x'$.

Now, we have the reduction of covariance that flip $i\in I$ to zero denoted by 
$ 2c(r_i-1)$, where $r_i$ is the number of selected items of the group that $i$ belong to. There are $r_i-\left\lceil\frac{\ell}{K}\right\rceil$ one bits need to be flipped in this group that achieve this reduction to attend balance, and $r_i > \left\lceil\frac{\ell}{K}\right\rceil$, so the total contribution for this group is 
\begin{align}
    2c(r_i-1)\left(r_i-\left\lceil\frac{\ell}{K}\right\rceil\right) \geq c(r_i-1)r_i -c\left(\left\lceil\frac{\ell}{K}\right\rceil-1\right)\left\lceil\frac{\ell}{K}\right\rceil.
    \label{pot:reduction}
\end{align}

Similarly, we have the increase of covariance when flip a bit $j \in J$ denoted by $2c r_j$, where $r_j$ is the number of selected items of $x$ in the group that $j$ belong to. There are $\left\lfloor\frac{\ell}{K}\right\rfloor-r'_j$ zero bits flip in this group to attend balance and the total contribution for group $k'$ is 
\begin{align}
    2c(r'_j)\left(\left\lfloor\frac{\ell}{K}\right\rfloor-r'_j\right) \leq c\left(\left\lfloor\frac{\ell}{K}\right\rfloor-1\right)\left\lfloor\frac{\ell}{K}\right\rfloor - c(r'_j-1)r'_j.
    \label{pot:increast}
\end{align}
Define the total reduction of covariance from $x$ to $x'$ by inequalities (\ref{pot:reduction}) and (\ref{pot:increast}) as 
\begin{align}
   & \sum_{i=1}^q 2c(r_i-1) - \sum_{j=1}^q 2c r'_j \nonumber\\
   \geq & \resizebox{0.9\hsize}{!}{$c(r_i-1)r_i -c\left(\left\lfloor\frac{\ell}{K}\right\rfloor-1\right)\left\lfloor\frac{\ell}{K}\right\rfloor - \left(c\left(\left\lfloor\frac{\ell}{K}\right\rfloor-1\right)\left\lfloor\frac{\ell}{K}\right\rfloor - c(r'_j-1)r'_j\right)$} \nonumber\\
   =& s_x-s^b_{\ell}
\end{align}

Therefore performing all $q$ 2-bit flips simultaneously changes $x$ into $x'$ and leads to a covariance decrease at least as large as $s_x-s_{\ell}^b$, where $s^b_{\ell}$ denotes the covariance of the balanced solution with exactly $i$ one bits.

For all $t \in \mathbb{N}$, let $x^{(t)} \in L_i$ be a fixed, non-empty solution generated at time $t$ by the (1+1)~EA, and let $X^{(t)}= s_{x^{(t)}}-s_i^{b}$. Then
\begin{align}
    X^{(t)}-X^{(t+1)} = s_{x^{(t)}}-s_{x^{(t+1)}}.
\end{align}
Let $Y=\{y_1,\ldots,y_q\}$ with $q\in\{1,\ldots,n\}$ be the set of $q$ different search points that on the same level of $x$ in the search space generated from $x$ by one of the $q$ acceptable different 2-bit flips. We have $s_{y_i}\leq s_x$ for all $i\in \{1,\ldots,q\}$ and 
\begin{align}
    \sum_{i=1}^q (s_x - s_{y_i}) \geq s_x -s^b_i.
    \label{fun:sumdiffer}
\end{align}
Since each $y_i$ is generated from $x$ by one of the $q$ 2-bit flip,

\begin{align}
    \Pr[x^{(t+1)} \in Y | x^{(t)} =x] = q\left(1-\frac{1}{n}\right)^{n-2}\frac{1}{n}^2 \geq \frac{q}{en^2}
    \label{fun:probability11}
\end{align}
of the (1+1)~EA. Furthermore, 
\begin{align}
    E[X^{(t)}- X^{(t+1)} | x^{(t)} =x, x^{(t+1)} \in Y] = \frac{s_x-s^b_i}{q} = \frac{X^{(t)}}{q}.
    \label{fun:sumprob}
\end{align}

The algorithm cannot accept an offspring on the same level that increases the covariance, that is, $X^t -X^{t+1}$ is non-negative. Thus, we have by \eqref{fun:probability11} and \eqref{fun:sumprob} that
\begin{align}
    E[X^{(t)} - X^{(t+1)} | x^{(t)} =x] \geq \frac{X^{(t)}}{en^2}.
\end{align}

\end{proof}

\begin{Lemma}
\label{lem:one}
The expected time for the (1+1)~EA to transform a solution in $S_{i\gamma}$ to a solution in $S_{i\zeta}\cup L_j$ where $j>i$ is bounded by $O(n^2\log n)$.

\label{lem:transInlevel}
\end{Lemma}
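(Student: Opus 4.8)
The plan is to apply multiplicative drift analysis using the potential $X^{(t)} = s_{x^{(t)}} - s_i^b$ from Lemma \ref{lemma:twobitflip}, where $s_i^b$ is the covariance of the balanced solution on level $i$. That lemma already supplies the key ingredient: while the current point lies in $S_{i\gamma}$, the expected one-step decrease obeys $E[X^{(t)}-X^{(t+1)}\mid x^{(t)}=x]\geq \frac{X^{(t)}}{en^2}$, giving a multiplicative drift coefficient $\delta=\frac{1}{en^2}$. What remains is to pin down the target of the drift, to bound the initial potential, and to bound the smallest positive potential.

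The first and most delicate step is to show that driving the potential to zero actually lands the process in $S_{i\zeta}\cup L_j$. Because the claim concerns a move to a strictly larger level $j>i$, I would work under the assumption $i<r$, so that a feasible solution with $i+1$ ones exists (any feasible solution on a level above $i$ restricts, by deleting items, to a feasible solution on level $i+1$, since deletion only lowers covariance and expected weight). I would then verify that adding one item to the balanced level-$i$ solution in a group currently holding $\lfloor i/K\rfloor$ items yields exactly the balanced level-$(i+1)$ solution; by Definition \ref{def:partition} the latter has minimal covariance on level $i+1$ and is therefore feasible whenever level $i+1$ is nonempty. Hence the balanced solution on level $i$ lies in $S_{i\zeta}$, which both shows that $X^{(t)}=0$ implies the target has been reached and, conversely, that no point of $S_{i\gamma}$ is balanced, so $X^{(0)}>0$ and the covariance-reducing 2-bit flips of Lemma \ref{lemma:twobitflip} always exist. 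Thus the process stays inside $S_{i\gamma}$, where the drift is valid, until the first time it enters $S_{i\zeta}\cup L_j$, and this happens no later than the first time $X^{(t)}=0$.

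Next I would bound the potential range. Writing the covariance of a level-$i$ solution as $s_x=c\sum_{k=1}^K r_k(r_k-1)$ shows that $X^{(t)}$ is $c$ times a nonnegative integer, so its smallest positive value satisfies $X_{\min}\geq c$. The initial value is at most $s_i^{ub}-s_i^b=O(cn^2)$, since the most unbalanced solution maximizes the covariance on level $i$. The factor $c$ cancels in the ratio, giving $X^{(0)}/X_{\min}=O(n^2)$ and $\ln(X^{(0)}/X_{\min})=O(\log n)$.

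Finally, the multiplicative drift theorem gives an expected hitting time of at most $\frac{1+\ln(X^{(0)}/X_{\min})}{\delta}=en^2\,(1+O(\log n))=O(n^2\log n)$, as required. I expect the main obstacle to be the target argument of the second paragraph: everything hinges on establishing that vanishing excess covariance forces membership in $S_{i\zeta}$ rather than leaving the search point stuck at minimal covariance inside $S_{i\gamma}$; the range bounds and the drift computation itself are then routine.
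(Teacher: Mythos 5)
Your proposal is correct and follows essentially the same route as the paper: multiplicative drift on the potential $X^{(t)} = s_{x^{(t)}} - s_i^b$ supplied by Lemma \ref{lemma:twobitflip}, with initial value $O(n^2)$ and drift coefficient $1/(en^2)$, yielding $O(n^2\log n)$. You are in fact more careful than the paper on two points it glosses over, namely justifying that $X^{(t)}=0$ forces membership in $S_{i\zeta}$ when $i<r$ (via the balanced level-$i$ solution extending to the balanced, hence feasible, level-$(i+1)$ solution) and observing that the minimum positive potential is $c$ rather than $1$, with the factor cancelling in the logarithmic ratio.
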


\begin{proof}

According to Lemma \ref{lemma:twobitflip}, the drift on $X^{(t)}$ is at least $\frac{X^{(t)}}{en^2}$ for the (1+1)~EA. Therefore, since the both algorithms start with $X^{(t)}\leq s^i = O(n^2)$ and the minimum value of $X^{(t)}$ before reaching $X^{(t)}=0$ is 1, by multiplicative drift analysis, the expected time is at most $O(n^2\log n)$ to reach a solution in $S_{i\zeta}$. Then, if $i<r$, it is possible for the (1+1)~EA to generate a feasible in $L_{i+1}$ with probability $1/en$. The total expected time of the (1+1)~EA until an solution in $S_{i\zeta}\cup L_j$ is generated is thus bounded by $O(n^2\log n)$.
\end{proof}

\begin{theorem}
The expected time until the (1+1)~EA working on the fitness function \eqref{fun:fitness} constructs the optimal solution to Instance~\ref{Ins:uniform profit} is bounded by $O(n^3\log n)$.
\end{theorem}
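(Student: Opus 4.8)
The plan is to combine Lemma~\ref{lem:onereachfeasible} with a level-based argument over the partition $L_0,\ldots,L_r$ of Definition~\ref{def:bi-partition}. First I would observe that, since all profits are unit, maximizing $p(x)$ coincides with maximizing $|x|_1$, so the optimal solution is a feasible search point on the top level $L_r$ that additionally has the smallest possible covariance; by Definition~\ref{def:partition} this is the balanced solution $r^{b}$, which is itself feasible because it attains the minimal covariance among all level-$r$ points and $L_r\neq\emptyset$. By Lemma~\ref{lem:onereachfeasible} the algorithm reaches a feasible solution in expected time $O(n^2\log n)$, and because selection is monotone and every feasible point dominates every infeasible one, the run never re-enters the infeasible region and never decreases $p'(x)=|x|_1$ once a given level has been reached.

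Second, I would bound the time to climb from an arbitrary feasible level to the top. The key invariant is that $|x|_1$ is nondecreasing, so the run splits into phases indexed by the current level $i$. Within a phase, if the current point lies in $S_{i\zeta}$ then by definition an item can be feasibly added, so a single accepted one-bit flip moves the search to $L_{i+1}$; such a flip has probability at least $1/(en)$, contributing only $O(n)$ expected steps. If the current point lies in $S_{i\gamma}$, then Lemma~\ref{lem:transInlevel} gives expected time $O(n^2\log n)$ to reach $S_{i\zeta}\cup L_j$ with $j>i$. In either case the expected time to leave level $i$ upward is $O(n^2\log n)$.

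Third, I would sum over levels. Since $r\le n$, there are at most $n$ phases, each costing $O(n^2\log n)$ in expectation, so by linearity of expectation the expected time to reach the top level $L_r$ is $O(n^3\log n)$. Finally, once the algorithm is on $L_r$ it can no longer increase the profit, so all of $L_r$ behaves as $S_{r\gamma}$, and the multiplicative drift established in Lemma~\ref{lemma:twobitflip} on $X^{(t)}=s_{x^{(t)}}-s_r^{b}$ drives the covariance down to its balanced minimum; starting from $X^{(t)}=O(n^2)$ with minimum positive value $1$, this costs a further $O(n^2\log n)$ steps. Adding the feasibility phase, the climbing phases, and the final covariance-minimization phase yields the claimed $O(n^3\log n)$ bound.

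The main obstacle I expect is making the phase decomposition fully rigorous: I must argue that the $O(n^2\log n)$ per-level bound of Lemma~\ref{lem:transInlevel} may be reapplied afresh at every level regardless of whether the run enters that level in $S_{i\gamma}$ or $S_{i\zeta}$, and that level-skipping transitions $S_{i\gamma}\to L_j$ with $j>i+1$ only help rather than hurt the overall bound. Care is also needed at the boundary, namely confirming that the balanced solution on $L_r$ is genuinely the global optimum (both maximal profit and minimal $\beta$, using that at a fixed level $E[W(x)]$ and the variance term $d\ell$ are constant so $\beta$ is monotone in the covariance), and that the drift of Lemma~\ref{lemma:twobitflip} indeed terminates at $r^{b}$ rather than at some other point of $S_{r\gamma}$.
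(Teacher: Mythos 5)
Your proposal follows essentially the same route as the paper's proof: Lemma~\ref{lem:onereachfeasible} for feasibility, Lemma~\ref{lem:transInlevel} (via Lemma~\ref{lemma:twobitflip}) to escape $S_{i\gamma}$ in $O(n^2\log n)$ steps, a $1/(en)$-probability one-bit flip to climb a level, and a sum over at most $n$ levels. Your explicit final phase on $L_r$ --- driving the covariance down to the balanced optimum $r^{b}$ by the multiplicative drift of Lemma~\ref{lemma:twobitflip} --- is a point the paper leaves implicit (its phrasing about reaching ``$L_{r+1}$'' is not quite coherent since $L_r$ is the top level), and your treatment of it is correct and does not change the $O(n^3\log n)$ bound.
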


\begin{proof}
By Lemmas \ref{lem:onereachfeasible} and \ref{lem:transInlevel}, for all $i<r$, it is sufficient to investigate the search process after having found a feasible solution $x\in S_{i\zeta}$, and after that, the algorithms can only accept an offspring with a larger number of one bits. It is possible for the (1+1)~EA to generate a feasible solution in $L_{i+1}$ by mutating exactly one zero bit to one. This event occurs with probability $\frac{1}{en}$ for the (1+1)~EA. 

Therefore, it will takes $O(n^2\log n+ en)$ steps to produce a feasible solution in level $L_{r+1}$ when started from a random feasible solution in $L_{r}$. Altogether, the expected optimization time is bounded by 
\begin{align}
    O(n^2\log n)+ \sum_{i=0}^{r-1} (n^2\log n + en)=O(n^3 \log n),
\end{align}
where $r<n$. 
\end{proof}

\section{Arbitrary profits mirrored by each group}
\label{sec:generalProfit}
We now turn our attention to the more complicated case where a single group has arbitrary profits, but this set of profits is the same for each of the $K$ groups. This resembles the case of general linear functions, but the chosen function is the shared by all groups.

\begin{Instance}
\label{Ins:Monotone profit}
Given $K$ groups, each group has $m$ items. There are $n= K \cdot m$ items in total, the capacity of knapsack is bounded by $B$. For $1\leq i \leq K$, $1 \leq j \leq m$, let $a_{ij}=a$, $\sigma^2_{ij}=d$ are constants, and let  $p_{i1}\geq p_{i2}\geq \ldots \geq p_{im}$ for $i\in\{1,\ldots,K\}$ and $p_{i\ell}=p_{k\ell}$ for each $i,k \in \{1, \ldots, K\}$, $1 \leq \ell \leq m$. The covariance of items in groups is $c$, i.e. we have $cov(e_{ij},e_{kl})=c$ iff $i=k$ and  $cov(e_{ij},e_{kl})=0$ otherwise.
\end{Instance}

\begin{theorem}
Starting with an arbitrary initial solution, the expected optimization time of RLS on the chance-constrained knapsack problem with correlated uniform weights is $O(n^3)$ on Instance~\ref{Ins:Monotone profit}.
\end{theorem}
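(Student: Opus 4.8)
The plan is to mirror the two-phase structure of the uniform-profit RLS analysis, now keeping track of profit as well. First I would apply Lemma~\ref{lem:RLSfindfeasible} to reach a feasible solution in expected time $O(n\log n)$, after which monotone selection never revisits the infeasible region. The key preliminary observation is that, since all items share the expected weight $a$ and variance $d$, the covariance $s(x)=2c\sum_{i=1}^K\binom{l_i(x)}{2}$ depends only on the group-count vector $(l_1(x),\dots,l_K(x))$, not on \emph{which} items are chosen inside a group; hence the feasibility of a cardinality level is governed purely by the counts, exactly as in Instance~\ref{Ins:uniform profit}. Meanwhile, for fixed counts the profit is maximized by taking the lowest-index (highest-profit) items of each group, and since the profile is shared across groups and sorted ($p_{i1}\ge\cdots\ge p_{im}$), the prefix sums $P(k)=\sum_{\ell=1}^{k}p_{1\ell}$ are concave. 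A short exchange argument then shows that, for every cardinality $\ell$, the profit-maximal feasible solution is the balanced solution of Definition~\ref{def:partition} with the top-ranked items chosen in each group, and that (profits being positive) the level is monotone non-decreasing after feasibility, so the global optimum lies at the largest feasible level $r$.

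Next I would bound the work per level by $O(n^2)$ and note that there are at most $n$ levels. Fix a non-optimal solution $x$ at level $\ell$ and let $g$ be the number of its selected items that do not belong to a balanced solution (equivalently the number of balanced-solution items it lacks), as in the uniform-profit proof. The central step is to exhibit at least $g^2$ \emph{accepted} improving $2$-bit flips. For the balancing component, an over-full group (holding more than its balanced quota $\lfloor \ell/K\rfloor$) contains, by pigeonhole, at least as many selected items of rank strictly above the quota as its surplus, and an under-full group contains at least as many unselected items of rank at most the quota as its deficit; summing gives at least $g$ removable items and at least $g$ addable slots. Swapping any removable item out and any addable item in moves mass from a fuller group to an emptier one, so the covariance strictly decreases, and --- this is the new ingredient --- the removed item has larger rank (hence no larger profit) than the added one, so the profit does not decrease and RLS accepts the flip. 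This gives acceptance probability at least $g^2/2n^2$, and $\sum_{g\ge1}2n^2/g^2=O(n^2)$ bounds the per-level cost; a single $1$-bit flip of probability $1/2n$ then climbs to level $\ell+1$ once a balanced solution admits an extra item. Summing over the at most $n$ levels yields the claimed $O(n^3)$.

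The step I expect to be the main obstacle is the coupling between balancing (covariance, which drives the climb) and within-group sorting (profit only), and in particular keeping the per-level bound at $O(n^2)$ rather than $O(n^2\log n)$. Unlike the uniform case, a covariance-reducing swap is accepted only if it does not lose profit, so I must verify that restricting to profit-non-decreasing swaps still leaves $\Omega(g^2)$ improving moves; the rank-based pairing above is designed precisely to guarantee this, but its correctness hinges on handling the $\pm1$ edge cases at rank $\lfloor \ell/K\rfloor$ and possible profit ties (which a fixed tie-breaking order on ranks resolves). The more delicate issue is the mis-sorting contribution: the purely within-group profit-improving swaps number $\sum_i k_i^2$, where $k_i$ counts mis-ranked selections in group $i$, and when this quantity is spread thinly over many groups the convergent-sum estimate threatens to degrade to $O(n^2\log n)$. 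To preserve the clean $O(n^3)$ I would amortize the total sorting effort through the potential $\phi(x)=\sum_{x_{ij}=1} j$ (the sum of selected ranks), which decreases under both sorting moves and the chosen balancing moves and whose total decrease over the entire run is only $O(nm)$; showing that this sorting drift is always dominated by the balancing drift, so that sorting never injects an extra logarithmic factor, is where the real work of the proof lies.
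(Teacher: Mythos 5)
Your route is genuinely different from the paper's, and the obstacle you flag in your last paragraph is exactly where it breaks down, so this is a real gap rather than a stylistic difference. You decompose by cardinality level and try to drive each level to the balanced, top-ranked configuration via an $\Omega(g^2)$-accepted-moves argument. The balancing half is fine: your rank-based pairing does yield swaps that are simultaneously covariance-non-increasing and profit-non-decreasing, so they are accepted. But the within-group sorting half does not deliver $O(n^2)$ per level. When the $k=\sum_i k_i$ mis-ranked selections are spread one per group, only about $k$ (not $\Omega(k^2)$) accepted sorting swaps are available, the resulting harmonic sum gives $O(n^2\log n)$ per level, and the potential $\phi(x)=\sum_{x_{ij}=1}j$ you propose only bounds the \emph{number} of accepted sorting swaps over the run, not the \emph{waiting time} between them, so it cannot by itself remove the logarithmic factor. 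You correctly identify this as where the real work lies, but that work is not done, and it is not clear it can be completed inside your level-by-level framework.

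The paper sidesteps the issue entirely by choosing a different progress measure: the longest prefix $P_j=\{p_1,\ldots,p_j\}$ of the globally sorted profit multiset that is contained in the current solution's profit profile. It argues that an accepted move never loses this prefix, and that extending $P_j$ to $P_{j+1}$ requires only \emph{one} specific accepted $1$- or $2$-bit flip (chosen by a short case analysis on whether the group containing the missing profit is among the fullest groups), hence $O(n^2)$ expected time per extension and $O(n^3)$ overall since $j\leq n$. Because only a single witness move per increment is needed and there are only $n$ increments, the paper never has to count how many improving moves are simultaneously available --- which is precisely the counting that injects the log factor into your sorting phase. If you want to salvage your approach, the cleanest repair is to replace the per-level sorting analysis by this prefix invariant (or an equivalent profit-structure measure) rather than by distance to the sorted configuration.
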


\begin{proof}
By Lemma \ref{lem:RLSfindfeasible}, RLS finds a feasible solution in expected time $O(n\log n)$. Also, since all feasible solutions dominate infeasible solutions, the algorithm does not switch back to the infeasible region again. By the definition of Instance \ref{Ins:Monotone profit} that items in a group have different profit and the same weight, it is possible to have more than one balanced solution in each level of this case, but only one balanced solution with maximum profit, where we ignore the order of groups.

We order all items regarding to their profit as $p_{11} = p_{21} = \ldots = p_{K1}\geq p_{12} = p_{22}= \ldots = p_{K2}\geq \ldots \geq p_{1m}= p_{2m}= \ldots = p_{Km}$.

For a given solution $x$, we call the multi-set $P(x) = \{p_i \mid x_i=1\}$ the \emph{profit profile} of $x$, i.e., the multi-set of profit values selected by $x$.
We say that a profit profile $P$ is contained in $P(x)$ if $P \subseteq P(x)$.
Let $x$ be a feasible solution whose profit profile contains $P_j=\{p_1, \ldots, p_j\}$ (but which does not contain $P_{j+1}$).
We claim that RLS does not accept a solution whose profit profile does not contain $P_j$. An operation flipping a single $1$-bit that flips a 1 to 0 is clearly not accepted, as it reduces the profit and therefore cannot lead to a solution not containing $P_j$. A $2$-bit flip is only accepted if it does not decrease the profit, and therefore also cannot create a solution whose profit profile does not contain $P_j$, as $P_j$ contains the $j$-largest profits of the given input.

We analyze the time to transform a solution $x$ containing profit profile $P_j =\{p_i \mid 1 \leq i \leq j\}$ into a solution $x'$ containing profit profile $P_{j+1}$. Consider the profit $p_{j+1}$ in the group with the smallest number of elements whose bit $x_i$ is set to $0$. Flipping $x_i$ adds the profit $p_{j+1}$ to the profile $P_j$. 
Assume that bit $x_i$ belongs to group $r \in \{1, \ldots, K\}$, i.e., $x_i= x_{rs}$. If there is another item selected in group $r$ (selected by $x_{rs'}=1$) whose profit is less than $p_j$, then flipping both $x_{rs}$ and $x_{rs'}$ leads to an accepted solution $x'$ with $P_{j+1} \subseteq P(x')$. This happens with probability $1/2n^2$. Assume now that there is no such item in group $r$. Then $p_{j+1}$ is the largest non selected profit in group $r$. 

Let $S$ be the set of groups with the largest number of items selected and $\hat{p}$ the smallest selected profit from these groups. Assume that $x_i$ is not in one of the groups in $S$. Then flipping $x_i$ to $1$ and setting the bit corresponding to $\hat{p}$ to $0$ is accepted and leads to a solution containing profit profile $P_{j+1}$.
If $x_i$ is in one of the groups in $S$, then there is another item selected in $S$ with a profit smaller than $p_{j+1}$ or the solution $x$ is already optimal.

Altogether, to produce a solution $x'$ containing $P_{j+1}$ from a solution with $P_j$, RLS needs at most $O(n^2)$ steps, and since there are at most $n$ items in any solution, the expected optimization time of RLS is $O(n^3)$. 
\end{proof}

Let $p_{\max} = p_{i1}, i\in\{1,\ldots K\}$ be the maximal profit of the given input.
\begin{theorem}
Starting with an arbitrary initial solution, the expected optimization time of the (1+1)~EA on the chance constrained knapsack problem with correlated uniform weight is $O(n^3 (\log n + \log p_{\max}))$ on Instance~\ref{Ins:Monotone profit}.
\end{theorem}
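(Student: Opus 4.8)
The plan is to follow the skeleton of the RLS proof for Instance~\ref{Ins:Monotone profit}, replacing each ``one enabling swap'' estimate by a multiplicative-drift estimate in the style of Lemmas~\ref{lemma:twobitflip} and~\ref{lem:transInlevel}; it is precisely this replacement that accounts for the extra logarithmic factors. First I would charge $O(n^2\log n)$ steps for reaching the feasible region by Lemma~\ref{lem:onereachfeasible}; since selection is monotone the (1+1)~EA never re-enters the infeasible region afterwards. I would then carry over the profit-profile bookkeeping essentially verbatim: order the items by non-increasing profit, write $P_j=\{p_1,\ldots,p_j\}$ for the $j$ largest profits, and observe as in the RLS case that once a feasible solution whose profile contains $P_j$ has been accepted, the (1+1)~EA never accepts a solution whose profile fails to contain $P_j$ (a $1$-bit flip deleting a profit lowers $p(x)$ and is rejected, and an accepted $2$-bit flip cannot drop any of the $j$ largest profits). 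The run therefore decomposes into at most $n$ stages, one per profit value still to be installed.

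The heart of the argument is bounding the expected length of a single stage, i.e.\ transforming a feasible solution containing $P_j$ into one containing $P_{j+1}$. Here I would introduce a potential $X^{(t)}$ for the current stage and apply multiplicative drift. Because the profits are mirrored across groups and non-increasing within a group, the profit-maximising configuration for a fixed number of ones is a balanced ``top-row'' selection, so the profit-improving direction and the covariance-reducing direction coincide: the enabling move removes a selected low-profit item from an over-full group and inserts the target profit $p_{j+1}$ into an under-full group, which simultaneously raises $p(x)$, lowers the covariance, and preserves feasibility. Such a specific $2$-bit flip is generated with probability at least $\tfrac{1}{en^2}$, exactly as in \eqref{fun:probability11}, and summing the potential gains over all currently available such swaps as in \eqref{fun:sumdiffer} yields a drift of at least $\tfrac{X^{(t)}}{en^2}$. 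Multiplicative drift then gives a per-stage bound of $O\!\left(n^2\log(\text{range of } X^{(t)})\right)$.

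The two additive logarithms come precisely from the range of $X^{(t)}$. In the uniform case the only quantity to be reduced is the covariance, whose value is at most $s_\ell^{ub}=O(n^2)$ and so contributes only $\log(n^2)=O(\log n)$; this is why Instance~\ref{Ins:uniform profit} costs only $O(n^3\log n)$ for the (1+1)~EA. With arbitrary profits the potential must also track how far the current profit profile is from the target, and this profit contribution can be as large as $O(n\,p_{\max})$ while its smallest positive value is a minimal profit difference (at least $1$ for integer profits), contributing a further $O(\log n+\log p_{\max})$. Hence each of the at most $n$ stages costs $O\!\left(n^2(\log n+\log p_{\max})\right)$, and adding the $O(n^2\log n)$ feasibility cost gives the claimed total of $O\!\left(n^3(\log n+\log p_{\max})\right)$.

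I expect the main obstacle to be the multiplicative-drift step for a single stage. One must verify that from \emph{every} non-optimal feasible solution containing $P_j$ the accepted $2$-bit swaps available to the (1+1)~EA have aggregate (profit-and-covariance) gain at least a constant fraction of $X^{(t)}$, so that the drift is genuinely proportional to the current potential rather than merely positive. This is delicate exactly in the boundary cases already singled out in the RLS proof -- when $p_{j+1}$ lies in a group belonging to the set $S$ of largest groups -- since there the profit-improving swap must be paired carefully with the covariance bookkeeping, and for the (1+1)~EA we cannot exhibit a single swap but must lower bound a sum over swaps. A secondary technical point is pinning down the range of $X^{(t)}$ so that the logarithm scales as $n\,p_{\max}$ and no worse, which is what makes the $\log p_{\max}$ term appear additively rather than multiplicatively inflating the bound.
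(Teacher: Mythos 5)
Your high-level accounting (feasibility in $O(n^2\log n)$, then $n$ phases each costing $O(n^2(\log n+\log p_{\max}))$ by multiplicative drift) matches the paper's, but the decomposition you build it on does not survive the move from RLS to the (1+1)~EA, and this is a genuine gap. You carry over the profit-profile invariant from the RLS proof: once a feasible solution containing $P_j$ is accepted, the algorithm never accepts a solution whose profile fails to contain $P_j$. Your justification only considers $1$-bit and $2$-bit flips, but the (1+1)~EA mutates every bit independently and can flip arbitrarily many bits in one step. A mutation that removes a top-$j$ item and inserts several lower-profit items can strictly increase $p(x)$ and is then accepted under the lexicographic fitness regardless of what happens to the profile, so containment of $P_j$ is not monotone and your ``at most $n$ stages'' bound is unsupported. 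The paper avoids this by partitioning the feasible region by the number of one-bits, $L_j=\{x \mid |x|_1=j,\ \beta(x)\le\alpha\}$, and bounding, for each $j$ separately, the expected number of offspring created while the parent lies in $L_j$; once a feasible solution with $p(y)>\max_{x\in L_j}p(x)$ exists, no element of $L_j$ is ever accepted again, and that property is robust to multi-bit mutations.

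The second missing piece is the drift construction itself, which you flag as ``the main obstacle'' but do not resolve. The paper's potential inside $L_j$ is simply the profit gap $g(x)=p(\hat{x}^{j,\max})-p(x)$ to a balanced maximal-profit solution $\hat{x}^{j,\max}$ (which exists because the profit multiset is mirrored across groups), not a combined profit-and-covariance quantity. The crucial step is an explicit matching: sort the groups by occupancy $\ell_1(x)\le\dots\le\ell_K(x)$, derive the prefix inequality \eqref{eq:sizes2}, and pair the $i$th element of $\hat{x}^{j,\max}-x$ with the $i$th element of $x-\hat{x}^{j,\max}$; the prefix inequality guarantees that each of the resulting $r$ $2$-bit flips moves an item into a group that is no fuller than the one it leaves (so $\beta$ does not increase) and does not decrease profit (else $\hat{x}^{j,\max}$ would not be maximal in $L_j$), so each is individually accepted, and their profit gains sum to exactly $g(x)$. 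That yields drift at least $g(x)/(en^2)$ with $1\le g(x)\le n\,p_{\max}$, which is where $\log n+\log p_{\max}$ comes from. Without this matching (or an equivalent argument) the claim that the available accepted swaps have aggregate gain proportional to the potential is unproven, and since your potential $X^{(t)}$ is never actually defined, the multiplicative drift theorem cannot be invoked.
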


\begin{proof}
According to Lemma \ref{lem:onereachfeasible}, the expected time to reach a feasible solution is $O(n^2\log n)$. Therefore it is sufficient to start the analysis with a feasible solution, after which the (1+1)~EA will never sample an infeasible solution during the remainder of the optimization process. 

For our analysis, we consider the set of all solutions $L_j= \{x \mid |x|_1=j; \beta(x) \leq \alpha\}$ containing exactly $j$ $1$-bits. For each $j$ we show that the expected number of offspring created from an individual in $L_j$ is $O(n^2 (\log n + \log p_{\max})$. After this many iterations, either the optimal solution (contained in $L_j$) has been created, or a feasible solution $y$ with $p(y) > \max_{x\in L_j} p(x)$ has been produced, which implies that the algorithm will not accept any solution in $L_j$ afterwards.

We now show that the expected number of offspring created from solutions in $L_j$ is $O(n^2 (\log n + \log p_{\max}))$.
Let $x \in L_j$ be the current solution, and let $x^{j, \max} = \arg \max_{x\in L_j} p(x)$ be an arbitrary feasible solution in $L_j$ with the largest possible profit.
Denote the \emph{loss} by
$$l(x) = \sum_{i=1}^n p_i x^{j,\max}_i (1-x_i),$$
that is, the sum of the profits chosen by $x^{j, \max}$ but not chosen by $x$. Denote the \emph{surplus} by
$$s(x) = \sum_{i=1}^n p_i (1-x^{j,\max}_i) x_i,$$
that is, the sum of the profits chosen by $x$ but not chosen by $x^{\max,j}$.
Define the total increase in profit from $x$ to $x^{j,\max}$ as
$$g(x) = p(x^{j, \max}) - p(x) = l(x) - s(x).$$ 

Let $r=\sum_{i=1}^n x^{j,\max}_i (1-x_i)$ be the number of indexes set to $1$ by $x^{j,\max}$ and $0$ by $x$. We give a set of $r$ accepted $2$-bit flips where the sum of the increases in profit is $g(x)$.

We consider the $K$ groups and w.l.o.g.\ assume that they are sorted in increasing order with respect to the number of elements chosen by $x =(x_{11},x_{12},\ldots,x_{1m},\ldots,x_{Km})$.
Let $\ell_i(x)$, $1 \leq i \leq K$ be the number of elements chosen by $x$ in group $i$.
We have $\ell_1(x) \leq \ldots \leq \ell_K(x)$. We consider the solution $\hat{x}^{j,\max}$ of maximal profit in $L_j$ for which 
$\ell_1(\hat{x}^{j,\max}) \leq \ldots \leq \ell_K(\hat{x}^{j,\max})$ and $\ell_K(\hat{x}^{j,\max}) \leq \ell_1(\hat{x}^{j,\max}) +1$ holds. This implies that $\hat{x}^{j,\max}$ is a balanced solution having the smallest variance in $L_j$.  
Note that such a solution exists as we may reorder the groups as each group contains the same (multi-)set of profits.

We have 
\begin{equation}
\label{eq:sizes}
  \sum_{i=1}^k \ell_i(x) \leq \sum_{i=1}^k \ell_i(\hat{x}^{j,\max}), 1\leq k \leq K  
\end{equation}
as both solutions contain $j$ elements and the groups are sorted in increasing order of the number of elements chosen by $x$.

This implies
\begin{equation}
\label{eq:sizes2}
\sum_{i=1}^k  \ell_i(x - \hat{x}^{j,\max}) \leq\sum_{i=1}^k \ell_i(\hat{x}^{j,\max} - x), 1\leq k \leq K  
\end{equation}
as the intersection of $\hat{x}^{j,\max}$ and $x$ contributes the same to each summand. Here $x - y =\max\{x-y,0\}$ denotes the set different of the elements chosen by $x$ but not by $y$.

Therefore, the number of elements chosen by $\hat{x}^{j,\max}$ but not by $x$ is greater than or equal to the number of elements chosen by $x$ but not $\hat{x}^{j,\max}$ for each of the first $k$ groups.

We then define our set of $r$ $2$-bit flips.
The $i$th $2$-bit flip flips the $i$th $0$-bit of $x$  (in the order given by the bit string $x =(x_{11},x_{12},\ldots,x_{1m},\ldots,x_{Km})$) set to $1$ in 
$\hat{x}^{j,\max} - x$ and the $i$th $1$-bit in $x - \hat{x}^{j,\max}$. 

Consider operation $i$ and let $p_i'$ be the profit introduced and $p_i''$ be the profit to be removed.
As per construction we have
$\sum_{i=1}^r p_i' = l(x)$ and
$\sum_{i=1}^r p_i'' = s(x)$ which implies that the total gain of the set of $r$ 2-bit flips is $g(x) = l(x) - s(x)$.
It remains to show that each of these $r$ $2$-bit flips is accepted by the algorithm.
Consider the $i$th operation. We show that $\beta(x)$ does not increase. Let $r'$ be the group that $p_i'$ belongs to and $r''$ be the group that $p_i''$ belongs to. We have $r' \leq r''$ due to Equation~\eqref{eq:sizes2} and therefore $\ell_{r'}(x) \leq \ell_{r''}(x)$. This implies that the $2$-bit flip leads to a solution $y$ with $\beta(y) \leq \beta(x)$.
We also have $p_i' \geq p_i''$ as otherwise we could improve the profit of $\hat{x}^{j,\max}$ which contradicts that $\hat{x}^{j,\max}$ is a feasible solution of maximal profit in $L_j$.

Given the set of $r$ accepted $2$-bit flips, the expected increase in profit is at least 
$$r/(en^2) \cdot g(x)/r = g(x)/(en^2),$$
as the probability of the (1+1)~EA to produce such a $2$-bit flip is $r/en^2$, and the average gain of this flip is $g(x)/r$.

For any non-maximal solution $x \in L_j$, we have $1 \leq g(x) \leq j\cdot p_{\max}$. Using multiplicative drift analysis, the expected number of offspring created from a solution $x\in L_j$ before having obtained a feasible solution $x^*$ with $p(x^*) \geq p(x^{j, \max})$ is therefore $O(n^2 (\log n + \log p_{\max})$. Moreover, $x^*$ is as better as the best solution in $L_j$, $x^*$ contains the top $j$ elements regarding to the profits of items, this implies that $x^*$ has the same construction as $x^{j, \max}$ and is a balanced solution that has smallest probability in $L_j$.

If $x^*$ is not optimal, then there exists a 1-bit flip adding an additional element and strictly improving the profit. There are at most $n$ level $L_j$ which implies that the expected time until an optimal solution has been achieved is $O(n^3 (\log n + \log p_{\max})$.
\end{proof}

\section{Conclusions}
The chance-constrained knapsack problem with correlated uniform weights plays a key role in situations where dependent stochastic components are involved. We have carried out a theoretical analysis on the expected optimization time of RLS and the (1+1)~EA on the chance-constrained knapsack problem with correlated uniform weights in this paper. We are interested in minimizing the probability that our solution will violate the constraint. We prove the bounds for both algorithms for producing a feasible solution. Then we carried out analyses of two settings for the problem, the one with uniform profits and the groups in the second case has the same profits profile. Our proofs are designed to provide insight into the structure of these problems and to reveal new challenges in deriving runtime bounds in the chance-constrained setting with the general type of stochastic variables. 

\section{Acknowledgements}
This research has been supported by the SA Government through the PRIF RCP Industry Consortium

\bibliographystyle{abbrv}
\bibliography{main}

\end{document}